\begin{document}

\title{PT-ISABB: A Hybrid Tree-based Complete Algorithm to Solve Asymmetric Distributed Constraint Optimization Problems}  % put your title here!

% AAMAS: as appropriate, uncomment one subtitle line; see camera ready instructions
%\subtitle{Extended Abstract}
%\subtitle{Blue Sky Ideas Track}
%\subtitle{JAAMAS Track}
%\subtitle{Doctoral Consortium}                              
%\subtitle{Demonstration}
%\subtitlenote{Please refrain from using subtitle notes}

% replace this with your author block!
\author{Yanchen Deng}
\affiliation{
	\institution{College of Computer Science, Chongqing University}
	\city{Chongqing} 
	\state{China} 
	\postcode{400044}
}
\email{dyc941126@126.com}
\author{Ziyu Chen}\authornote{Corresponding author.}
\affiliation{
	\institution{College of Computer Science, Chongqing University}
	\city{Chongqing} 
	\state{China} 
	\postcode{400044}
}
\email{chenziyu@cqu.edu.cn}
\author{Dingding Chen}
\affiliation{
	\institution{College of Computer Science, Chongqing University}
	\city{Chongqing} 
	\state{China} 
	\postcode{400044}
}
\email{dingding@cqu.edu.cn}
\author{Xingqiong Jiang}
\affiliation{
	\institution{College of Computer Science, Chongqing University}
	\city{Chongqing} 
	\state{China} 
	\postcode{400044}
}
\email{jxq@cqu.edu.cn}

\author{Qiang Li}
\affiliation{ 
	%	\fontsize{4}{3}
	\institution{College of Electrical Engineering, Chongqing University}
	\city{Chongqing} 
	\state{China} 
	\postcode{400044}
}
\email{qiangli.ac@gmail.com}

\begin{abstract}  % put your abstract here!
Asymmetric Distributed Constraint Optimization Problems (ADCOPs) have emerged as an important formalism in multi-agent community due to their ability to capture personal preferences.  However, the existing search-based complete algorithms for ADCOPs can only use local knowledge to compute lower bounds, which leads to inefficient pruning and prohibits them from solving large scale problems. On the other hand, inference-based complete algorithms (e.g., DPOP) for Distributed Constraint Optimization Problems (DCOPs) require only a linear number of messages, but they cannot be directly applied into ADCOPs due to a privacy concern. Therefore, in the paper, we consider the possibility of combining inference and search to effectively solve ADCOPs at an acceptable loss of privacy. Specifically, we propose a hybrid complete algorithm called PT-ISABB which uses a tailored inference algorithm to provide tight lower bounds and a tree-based complete search algorithm to exhaust the search space. We prove the correctness of our algorithm and the experimental results demonstrate its superiority over other state-of-the-art complete algorithms.
\end{abstract}

% AAMAS: the ACM CCS are encouraged but optional within AAMAS papers
%%
%% The code below should be generated by the tool at
%% http://dl.acm.org/ccs.cfm
%% Please copy and paste the code instead of the example below. 
%%
%\begin{CCSXML}
%<ccs2012>
% <concept>
%  <concept_id>10010520.10010553.10010562</concept_id>
%  <concept_desc>Computer systems organization~Embedded systems</concept_desc>
%  <concept_significance>500</concept_significance>
% </concept>
% <concept>
%  <concept_id>10010520.10010575.10010755</concept_id>
%  <concept_desc>Computer systems organization~Redundancy</concept_desc>
%  <concept_significance>300</concept_significance>
% </concept>
% <concept>
%  <concept_id>10010520.10010553.10010554</concept_id>
%  <concept_desc>Computer systems organization~Robotics</concept_desc>
%  <concept_significance>100</concept_significance>
% </concept>
% <concept>
%  <concept_id>10003033.10003083.10003095</concept_id>
%  <concept_desc>Networks~Network reliability</concept_desc>
%  <concept_significance>100</concept_significance>
% </concept>
%</ccs2012>  
%\end{CCSXML}
%
%\ccsdesc[500]{Computer systems organization~Embedded systems}
%\ccsdesc[300]{Computer systems organization~Redundancy}
%\ccsdesc{Computer systems organization~Robotics}
%\ccsdesc[100]{Networks~Network reliability}

\keywords{ADCOP; Complete algorithms; Search; Inference}  % put your semicolon-separated keywords here!

\maketitle

%%%%%%%%%%%%%%%%%%%%%%%%%%%%%%%%%%%%%%%%%%%%%%%%%%%%%%%%%%%%%%%%%%%%%%%%%%%%%%%%%%%%%%%%%%%%%%%%%%%%%%%%%
%% start of main body of paper

\section{Introduction}
Distributed Constraint Optimization Problems (DCOPs) \cite{yeoh2012distributed} are a fundamental framework in multi-agent systems where agents cooperate with each other to optimize a global objective. DCOPs have been successfully deployed in many real world applications including smart grids \cite{fioretto2017distributed}, radio frequency allocation \cite{monteiro2012multi}, task scheduling \cite{sultanik2007modeling}, etc.

Algorithms for DCOPs can generally be classified into two categories, i.e., complete algorithms and incomplete algorithms. Search-based complete algorithms like SBB \cite{hirayamay97}, AFB \cite{gershman2009asynchronous}, ConFB \cite{Netzer2012}, ADOPT \cite{modi2005adopt} and its variants \cite{yeoh2010bnb,gutierrezMY11} perform distributed searches to exhaust the search space, while inference-based complete algorithms including Action-GDL \cite{vinyals2009generalizing}, DPOP \cite{petcuF05} and its variants \cite{petcu2006odpop,petcu2007mb} use dynamic programming to optimally solve problems. In contrast, incomplete algorithms including local search \cite{Zhang2005Distributed,maheswaran2004distributed,okamoto2016distributed}, GDL-based algorithms \cite{farinelli2008decentralised,rogers2011bounded,zivan2012max,chen2017iterative} and sampling-based algorithms \cite{ottens2017duct,fioretto2016dynamic} trade optimality for small computational efforts.

Asymmetric Distributed Constraint Optimization Problems	(ADCOPs) \cite{grinshpounGZNM13} are a notable extension to DCOPs, which can capture ubiquitous asymmetric structures in real world scenarios \cite{maheswaranTBPV04,ramchurn2011agent,burke2007supply}. That is, a constraint in an ADCOP explicitly defines the exact payoff for each participant instead of assuming equal payoffs for constrained agents. Solving ADCOPs is more challenging since algorithms must evaluate and aggregate the payoff for each participant of a constraint. ATWB and SABB \cite{grinshpounGZNM13} are asymmetric versions of AFB and SBB based on an one-phase strategy in which the algorithms systematically check each side of the constraints before reaching a full assignment. Besides, AsymPT-FB \cite{litov2017forward} is another search-based complete algorithm for ADCOPs, which implements a variation of forward bounding on a pseudo tree. However, to the best of our knowledge, there is no asymmetric adaptation of inference-based complete algorithms for DCOPs (e.g., DPOP). That is partially because these algorithms require the total knowledge of each constraint to perform variable elimination optimally. In other words, parent agents must surrender their private constraints to eliminate their children variables, which is unacceptable in a asymmetric scenario.

In this paper, we investigate the possibility of combining both inference and search to efficiently solve ADCOPs at an acceptable loss of privacy. Specifically, our main contributions are listed as follows. 
\begin{itemize}
	\item We propose a hybrid tree-based complete algorithm for ADCOPs, called PT-ISABB.\footnote{The source code is available at https://github.com/czy920/DCOPSovlerAlgorithm\_PTISABB.} The algorithm first uses a tailored version of ADPOP \cite{petcu2005approximations} to solve a subset of constraints, and the inference results stored in agents are served as look-up tables for tight lower bounds. Then, a variant of SABB is implemented on a pseudo tree to guarantee optimality.
	\item We theoretically show the completeness of our proposed algorithm. Moreover, we also prove that the lower bounds in PT-ISABB are at least as tight as the ones in AsymPT-FB when its maximal dimension limit $k=\infty$ .
	\item We empirically evaluate our algorithm on various benchmarks. Our study shows that PT-ISABB requires significantly fewer messages and lower NCLOs than state-of-the-art search-based complete algorithms including AsymPT-FB. The experimental results also indicate that our proposed algorithm leaks less privacy than AsymPT-FB when solving complex problems.
\end{itemize}
\section{Background}
In this section, we review the preliminaries including ADCOPs, pseudo tree, DPOP and ADPOP.
\subsection{Asymmetric Distributed Constraint Optimization Problems}
An asymmetric distributed constraint optimization problem can be defined by a tuple $\langle A,X,D,F\rangle$ in which
\begin{itemize}
	\item $A=\{a_1,\dots,a_q\}$ is a set of agents
	\item $X=\{x_1,\dots,x_n\}$ is a set of variables
	\item $D=\{D_1,\dots,D_n\}$ is a set of finite and discrete domains. Each variable $x_i$ takes a value from $D_i$
	\item $F=\{f_1,\dots,f_m\}$ is a set of constraints. Each constraint $f_i:D_{i_1}\times\dots\times D_{i_k}\rightarrow \mathbb{R}^k_+$ defines a set of non-negative costs for every possible value combination of the set of variables it is involved in
\end{itemize}

For the sake of simplicity, we assume that an agent only controls a variable and all constraints are binary. Therefore, the term \textit{agent} and \textit{variable} can be used interchangeably. Besides, for the constraint between $x_i$ and $x_j$, we denote the private cost functions for $x_i$ and $x_j$ as $f_{ij}$ and $f_{ji}$, respectively. Note that in the asymmetric setting, $f_{ij}$ does not necessarily equal to $f_{ji}$. A solution to an ADCOP is the assignments to all the variables with the minimal aggregated cost. An ADCOP can be visualized by a \textit{constraint graph} in which the vertexes denote the variables and the edges denote the constraints between agents. Fig. 1 (a) visualizes an ADCOP with four agents and four constraints.
\begin{figure}
	\begin{minipage}{0.45\linewidth}
		\centering
		\includegraphics[scale=0.13]{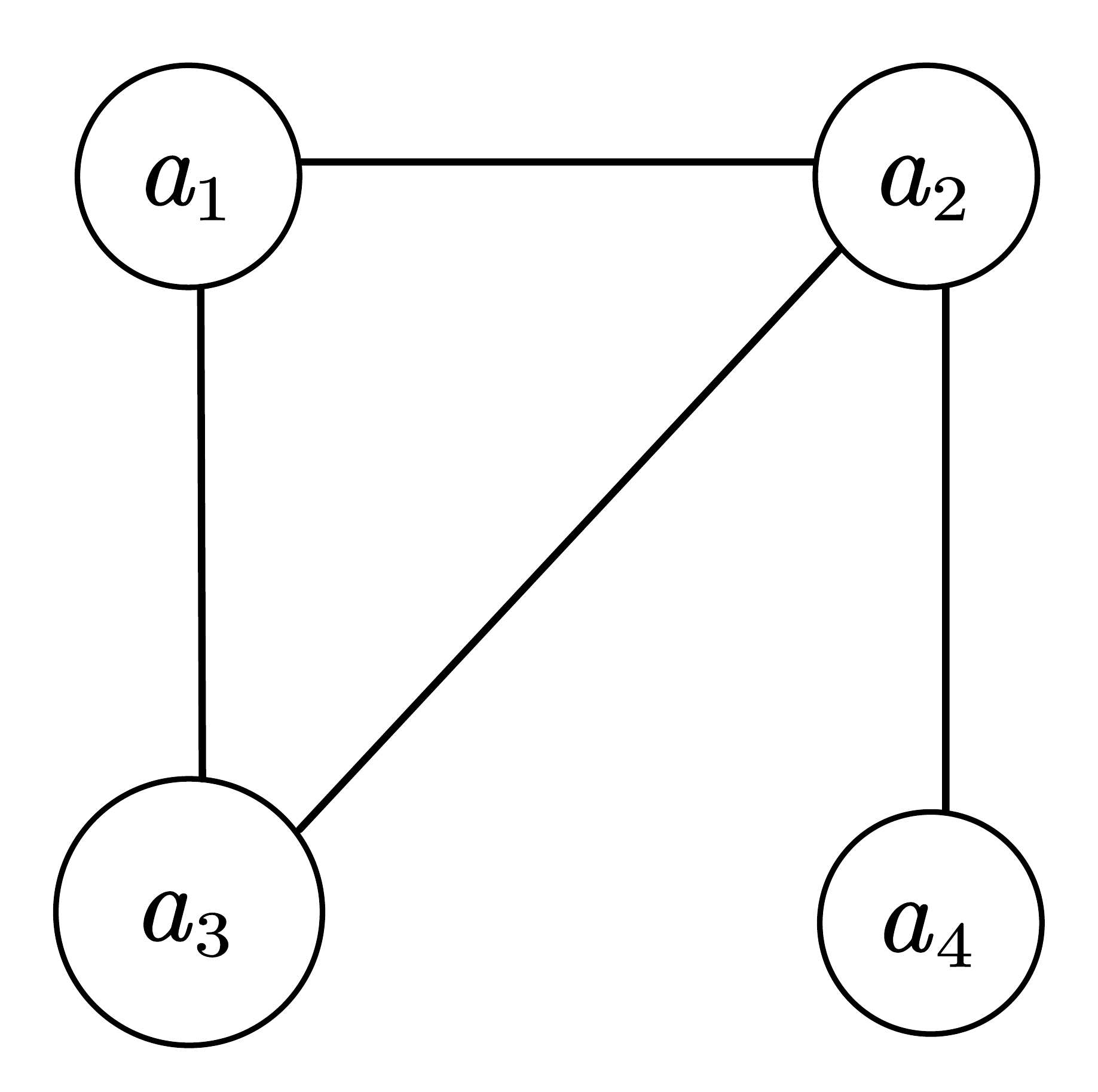}\\
		(a)
	\end{minipage}
	\begin{minipage}{0.45\linewidth}
		\centering
		\includegraphics[scale=0.13]{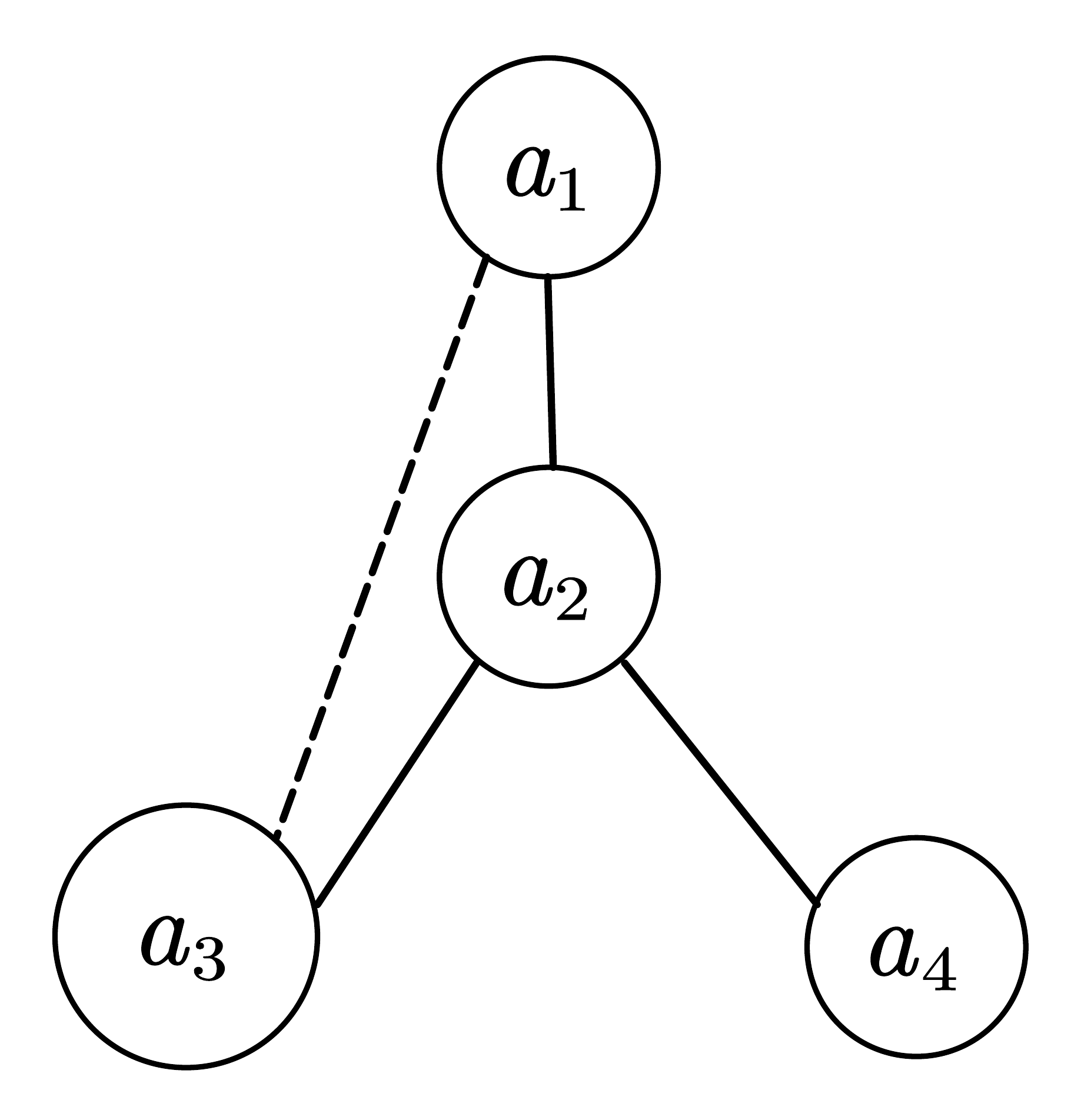}\\
		(b)
	\end{minipage}
	\caption{An example of constraint graph and pseudo tree}
\end{figure}
\subsection{Pseudo Tree}
A pseudo tree is an ordered arrangement to a constraint graph in which agents in different branches are independent and thus search can be performed in parallel on these independent branches. A pseudo tree can be generated by a depth-first traverse of the constraint graph, which categorizes the constraints into tree edges and pseudo edges. For an agent $a_i$, we denote its parent as $P(a_i)$ which is the ancestor connecting to $a_i$ through a tree edge, its pseudo parents as $PP(a_i)$ which is a set of ancestors connecting to $a_i$ through pseudo edges, its children and pseudo children as $C(a_i)$ and $PC(a_i)$ which are the sets of descendants connecting to $a_i$ via tree edges and pseudo edges, respectively. For the sake of clarity, we denote all the parents of $a_i$ as $AP(a_i)=PP(a_i)\cup \{P(a_i)\}$. We also denote its separator, i.e., the set of ancestors that are constrained with $a_i$ or its descendants, as $Sep(a_i)$ \cite{petcu2006odpop}. Fig. 1 (b) gives a possible pseudo tree deriving from Fig. 1 (a). 
\subsection{DPOP and ADPOP}
DPOP is an important inference-based algorithm that performs dynamic programming on a pseudo tree, starting with a phase of utility propagation. In the phase, each agent joins the received utilities from its children with its local utility, eliminates its dimension by calculating the optimal utility for each assignment combination of its separator, and propagates the reduced utility to its parent. After that, a value propagation phase starts from the root agent. In the phase, each agent chooses the optimal assignment according to the utilities calculated in the previous phase and the assignments from its parent, and broadcasts the extended assignments to its children. The algorithm terminates when all agents have chosen their optimal assignments.

Although DPOP only requires a linear number of messages to solve a DCOP, its memory consumption is still exponential in induced width, which prohibits it from solving more complex problems. Thus, Petcu et. al, proposed ADPOP which is an approximate version of DPOP and allows the desired trade-off between solution quality and computational complexity. Specifically, ADPOP imposes a limit $maxDim$ on the maximum number of dimensions in each message. When the number of dimensions in an outgoing message exceeds the limit, the algorithm drops a set of dimensions to stay below the limit. That is, the algorithm computes an upper bound and a lower bound by applying a maximal/minimal projection on these dimensions. During the value propagation phase, agents can make decisions according to the highest utilities in either upper bounds or lower bounds.
\section{Proposed Method}
In this section, we present our proposed PT-ISABB, a two-phase hybrid complete algorithm for ADCOPs. We begin with a motivation, and then present the details of the inference phase and the search phase, respectively.
\subsection{Motivation}
The existing search-based complete algorithms for ADCOPs can only use \textit{local} knowledge to compute lower bounds, which leads to inefficient pruning. More specifically, unassigned agents report the best \textit{local} costs under the given partial assignments to compute lower bounds.  Taking Fig. 1 as an example, in AsymPT-FB agent $a_1$ can receive \textbf{LB\_report}s from $a_2$ and $a_3$. As a consequence, $a_1$ can only be aware of the lower bounds of $f_{21}$ and $f_{31}$ and does not have any knowledge about the remaining constraints (i.e., the constraints between $a_2$ and $a_3$, $a_2$ and $a_4$). On the other hand, inference algorithms like DPOP are able to aggregate and propagate the \textit{global} utility, but they are not applicable to ADCOPs due to a privacy concern. For example, $a_3$ needs to know both $f_{13}$ and $f_{23}$ to optimally eliminate $x_3$, which violates the privacies of $a_1$ and $a_2$. Thus, to overcome the pathologies, we propose a novel hybrid scheme to solve ADCOPs, which combines both inference and search. Specifically, the scheme consists of the following phases.
\begin{itemize}
	\item \textbf{Inference phase}: performing a bottom-up utility propagation with respect to a subset of constraints to build look-up tables for lower bounds
	\item \textbf{Search phase}: using a tree-based complete search algorithm for ADCOPs to exhaust the search space and guarantee optimality
\end{itemize}

In this paper, we propose a tailored version of ADPOP for the inference phase to avoid the severe privacy loss and exponential memory consumption in DPOP. Furthermore, we implement SABB on a pseudo tree for the search phase and propose an algorithm called PT-ISABB. Although they both operate on pseudo trees, our algorithm excels AsymPT-FB twofold. When its maximal dimension limit $k=\infty$, the lower bounds in our algorithm are at least as tight as the ones in AsymPT-FB (see Property 4.1 for detail). Moreover, PT-ISABB avoids to perform forward bounding which is expensive during the search phase.
\subsection{Inference Phase}
\begin{figure}
	\centering
	\includegraphics[scale=.9]{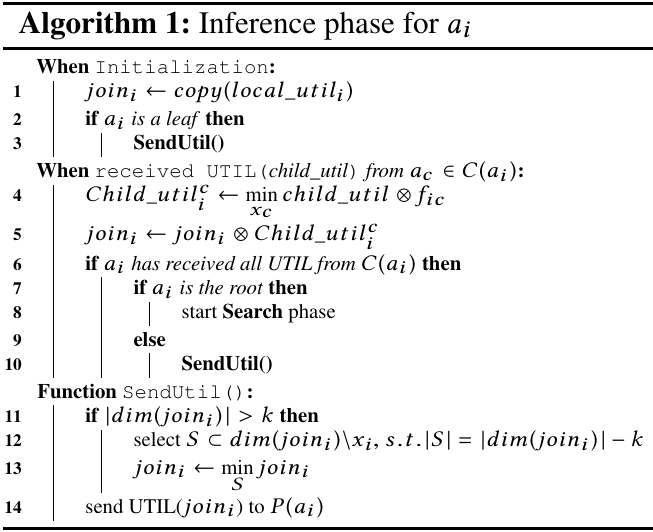}
	\caption{Pseudo code of inference phase}
\end{figure}
Fig. 2 gives the sketch of the inference phase for PT-ISABB. The phase begins with leaf agents who send their local utilities to their parents via UTIL messages (line 1 - 3). Particularly, if the number of dimensions in the utility exceeds the limit $k$ (line 11), we drop the dimensions of the highest ancestors to stay below the limit by a minimal projection (line 12 - 13). Here, $local\_util_i$ denotes the combination of the constraints between agent $a_i$ and its parent and pseudo parents enforced in $a_i$ side, i.e., 
$$local\_util_i=\bigotimes_{a_j\in AP(a_i)}f_{ij}$$
Note that in our algorithm we do not require the parent agents to disclose their private functions to perform inference exactly. In this way, a local utility table only involves the functions of that agent and the privacies of its parent and pseudo parents are therefore guaranteed. On the other hand, however, ignoring the private functions of parents and pseudo parents leads to severe inconsistencies when performing variable elimination. In other words, we actually trade lower bound tightness for privacy. We try to alleviate the problem by performing \textit{non-local} elimination which is elaborated as follows.

When $a_i$ receives a UTIL message from its child $a_c$, it joins the utility from $a_c$ with its corresponding private constraint function and then eliminates the dimension $x_c$ for a more complete utility $Child\_util_i^c$ (line 4). Compared to DPOP and ADPOP, the elimination of each variable is postponed to its parent in the pseudo-tree. Taking Fig. 1 (b) for example, the UTIL message from $a_3$ to $a_2$ is given by $$f_{32}+f_{31}$$ if the maximal dimension limit $k\ge 3$, and the elimination of $x_3$ is actually performed by $a_2$. That is, $$Child\_util_2^3=\min_{x_3}(f_{23}+f_{32}+f_{31})$$ Then, $a_i$ initiates the search phase after receiving all UTIL messages from its children if it is the root agent (line 6 - 8). Otherwise, it propagates the joint utility to its parent (line 9 - 10). 
\subsection{Search Phase}
\begin{figure}
	\centering
	\includegraphics[scale=.9]{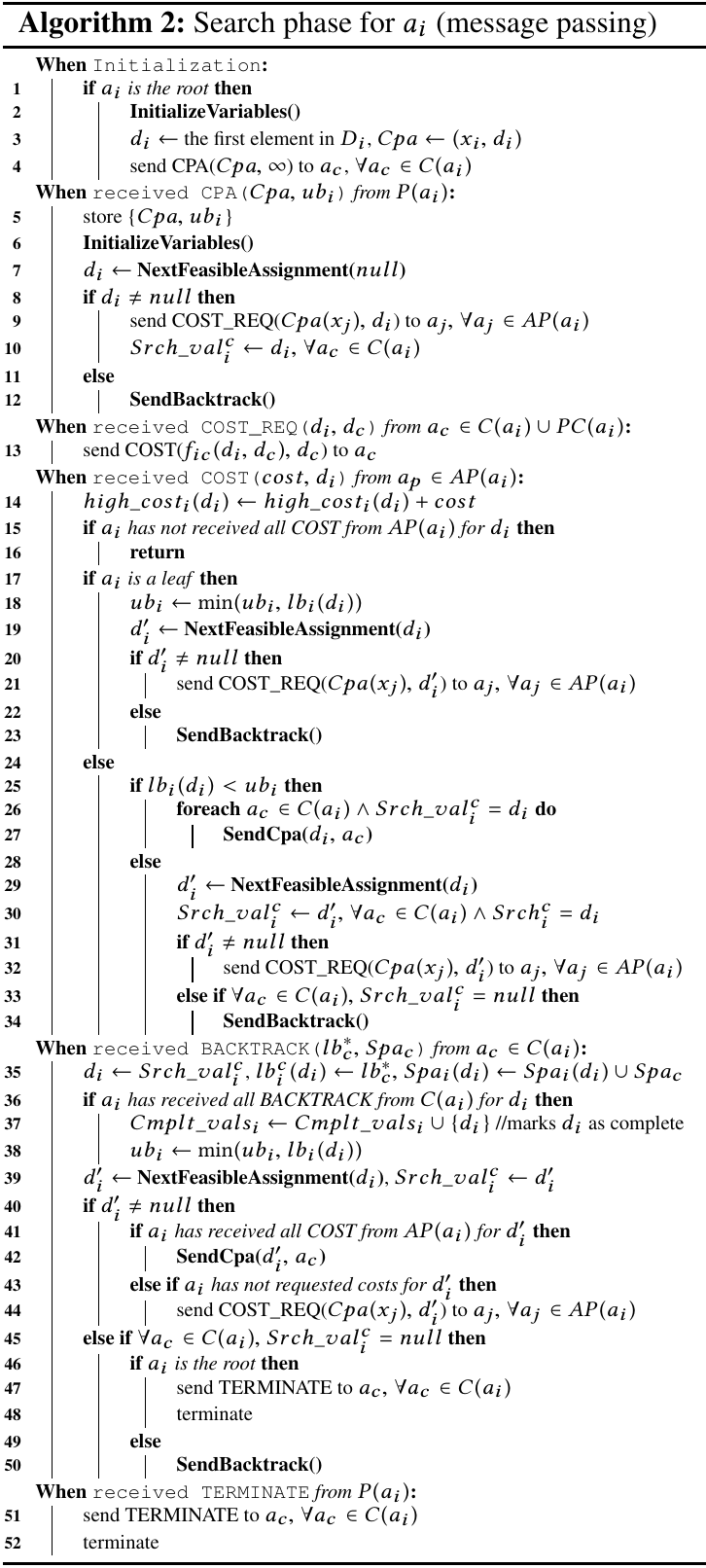}
	\caption{Pseudo code of search phase (message passing)}
\end{figure}
\begin{figure}
	\centering
	\includegraphics[scale=.9]{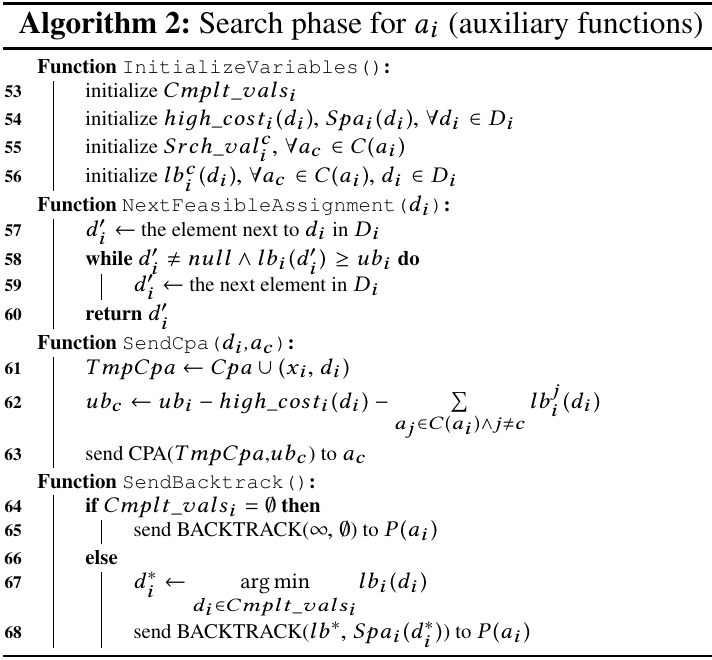}
	\caption{Pseudo code of search phase (auxiliary functions)}
\end{figure}
The phase performs a branch-and-bound search on a pseudo tree to exhaust the search space. Specifically,  each branching agent decomposes the problem into several subproblems, and each of its children solves a subproblem in parallel. To detect and discard the suboptimal solution, each agent maintains a upper bound for its subproblem and an lower bound for each value in its domain. Therefore, each agent $a_i$ needs to maintain the following data structures.
\begin{itemize}
	\item $Srch\_val_i^c$ records the assignment currently being explored in the subtree rooted at $a_c\in C(a_i)$. The data structure is necessary because asynchronous search is carried out in parallel in sub-trees based on different possible values of $x_i$.
	\item $high\_cost_i(d_i)$ is the cost for $d_i\in D_i$ between $a_i$ and its parent and pseudo parents under the current partial assignment ($Cpa$), which is initially set to the cost enforced in $a_i$ side. That is, 
	\begin{equation}
	\small
	high\_cost_i(d_i)=\sum_{a_j\in AP(a_i)}f_{ij}(d_i,Cpa(x_j))
	\end{equation}
	\item $lb_i^c(d_i)$ is the lower bound of child $a_c\in C(a_i)$ for $d_i\in D_i$, which is initially set to the utility under $Cpa$ and $d_i$. That is,
	\begin{equation}
	\small
	lb^c_i(d_i)=Child\_util_i^c(Cpa_{[Sep(a_c)]},x_i=d_i)
	\end{equation}
	where $Cpa_{[Sep(a_c)]}$ is a slice to $Cpa$ under $Sep(a_c)$, i.e., 
	$$Cpa_{[Sep(a_c)]}=\{(x_j,d_j)|(x_j,d_j)\in Cpa\land x_j\in Sep(a_c)\}$$
	When $a_i$ receives a BACKTRACK message from $a_c$, it replaces the initial lower bound with the actual cost reported by $a_c$ (or $\infty$ if $d_i$ is infeasible for $a_c$ given $Cpa$).
	\item $lb_i(d_i)$ is the lower bound for $d_i\in D_i$, i.e.,
	\begin{equation}
	\small
	lb_i(d_i)=high\_cost_i(d_i)+\sum_{a_c\in C(a_i)}lb_i^c(d_i)
	\end{equation}
	\item $Cmplt\_vals_i$ is the set of assignments for which $a_i$ has received all BACKTRACK messages from its children, and is initially set to $\emptyset$. Particularly, $Cmplt\_vals_i=D_i$ if $a_i$ is a leaf agent.
	\item $lb_i^*$ is the best cost explored under $Cpa$, which is given by
	\begin{equation}
	\small
	lb_i^*=\min_{d_i\in Cmplt\_vals_i}lb_i(d_i)
	\end{equation}
	Particularly, if $Cmplt\_vals_i=\emptyset$, $lb^*_i=\infty$.
	\item $Spa_i(d_i)$ is the optimal assignment to its subproblem under $Cpa$ when $x_i=d_i$ and is initially set to $\{x_i=d_i\}$. Particularly, $Spa_i(d_i^*)$ is the optimal solution if $a_i$ is  the root agent, where $d_i^*=\arg\min_{d_i\in Cmplt\_val_i}lb_i(d_i)$.
\end{itemize}

Fig. 3 and Fig. 4 give the pseudo codes of the search phase for PT-ISABB. The phase begins with the root agent sending the first element in its domain to its children (line 1 - 4). When an agent $a_i$ receives a CPA message from its parent, it first stores the partial assignment $Cpa$ and upper bound $ub_i$ and then finds the first feasible assignment (line 5 - 7), i.e., the first assignment $d_i$ such that $lb_i(d_i)<ub_i$ (line 57 - 60). If such an assignment exists, $a_i$ sends COST\_REQ messages to its parent and pseudo parents to request the private costs of other side for $d_i$ (line 8 - 10, line 13). Otherwise, it sends a BACKTRACK message with an infinity cost and an empty subproblem assignment (line 64 - 65) to its parent to announce that the given $Cpa$ is infeasible (line 11 - 12).

When $a_i$ receives a COST message for $d_i$, it adds the other side cost to $high\_cost_i(d_i)$ (line 14). After receiving all the COST messages for $d_i$ from its parent and pseudo parents, $a_i$ is able to determine whether it should continue to explore $d_i$. If $a_i$ is a leaf agent, it just updates the current upper bound and switches to the next feasible assignment $d^\prime_i$ (line 17 - 19) since the search space no longer needs to be expanded. If such $d^\prime_i$ exists, $a_i$ requests costs for the assignment (line 20 - 21). Otherwise, it backtracks to its parent by reporting the best cost and the best subproblem assignment explored under $Cpa$ (line 22 - 23, line 66 - 68). If $a_i$ is not a leaf agent and the current lower bound for $d_i$ is still less than its upper bound, it expands the search space by sending CPA messages to its children who are going to explore $d_i$ (line 25 - 27). Each message contains an extended partial assignment (line 61) and an upper bound which is the remainder after deducting the cost incurred by $d_i$ and the lower bounds of the other children from $a_i$'s upper bound (line 62). Otherwise, $d_i$ is proven to be suboptimal and the agent switches to the next feasible assignment (line 28 - 30). If such an assignment exists, $a_i$ requests costs for it (line 31 - 32). A backtrack takes place if all children exhaust $a_i$'s domain (line 33 - 34). 

When $a_i$ receives a BACKTRACK message for $d_i$ from a child $a_c$, it updates the corresponding lower bound $lb_i^c(d_i)$ with the actual cost $lb^*_c$  reported by $a_c$ if $Cpa$ and the assignment $d_i$ is feasible (otherwise $lb^*_c=\infty$), and merges the best assignments from $a_c$ (line 35). If $a_i$ has received all the BACKTRACK messages for $d_i$ from its children, it marks $d_i$ as complete and updates the current upper bound for its subproblem (line 36 - 38). $a_i$ also needs to determine the next assignment $d^\prime_i$ for $a_c$ to explore (line 39). If $d^\prime_i$ exists and $a_i$ has received all the COST messages from its parent and pseudo parents, it informs $a_c$ to explore $d^\prime_i$ by sending a CPA message (line 40 - 42). Otherwise, $a_i$ requests costs for $d^\prime_i$ if it has not been done (line 43 - 44). If $d^\prime_i$ does not exist and all children have exhausted $a_i$'s domain, $a_i$ informs its children to terminate and terminates itself if it is the root agent (line 45 - 48). Otherwise, it backtracks to its parent (line 49 - 50).
\section{Theoretical Results}
\subsection{Correctness}
\newtheorem{property}{Property}[section]
%To prove correctness, we must prove termination and optimality. 
In this section, we first prove the termination and optimality, and further establish the completeness of PT-ISABB.  
\begin{lemma}
	PT-ISABB will terminate after a finite number of iterations.
\end{lemma}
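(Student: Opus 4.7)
The plan is to decompose the argument along the two phases of PT-ISABB and bound, in each phase, the total number of messages that can be generated before every agent becomes idle. Since each message triggers only a bounded amount of local computation, finite message count implies termination in finitely many iterations.

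For the inference phase I would argue by induction on the height of the pseudo tree. Every leaf agent sends exactly one UTIL message (line 1--3). An internal agent $a_i$ waits for a UTIL message from each child in $C(a_i)$ before processing (line 4) and then either terminates the phase (if it is the root) or emits exactly one UTIL message to its parent (line 9--10). The projection step used when the dimension limit is exceeded only shrinks the table and cannot spawn additional messages. Hence the phase produces exactly $n-1$ UTIL messages in total and trivially terminates.

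For the search phase, the key invariant I would establish is that, for a fixed incoming Cpa at agent $a_i$, $a_i$ issues outgoing CPA messages to any particular child $a_c\in C(a_i)$ on a strictly increasing sequence of values $d_i\in D_i$. This follows from the fact that each value is either (i) filtered out by the feasibility test in \textsc{FindFeasibleValue} (line 57--60) because $lb_i(d_i)\geq ub_i$, or (ii) tried once, then added to $Cmplt\_vals_i$ after the corresponding BACKTRACK arrives (line 36--37), and the agent then calls the feasibility test starting from the next value in the enumeration. Consequently, for a fixed Cpa, $a_i$ sends at most $|D_i|$ CPA messages per child, at most $|AP(a_i)|$ COST\_REQ messages per value it actually explores, and exactly one BACKTRACK to its parent. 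Now I would invoke induction on the depth of the subtree rooted at $a_i$: leaves send BACKTRACK immediately (line 17--23 / 64--68); an internal agent, having issued only finitely many CPA messages each of which by induction produces a BACKTRACK in finite time, must eventually either find a next feasible value or exhaust its domain and backtrack (line 45--50). The root therefore eventually executes the termination branch (line 45--48), and the broadcast of TERMINATE halts the remaining agents.

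The main obstacle, and the step that deserves the most care, is the invariant that $a_i$ makes strictly monotone progress through $D_i$ for each incoming Cpa despite the interleaving of CPA, COST, and BACKTRACK handlers. I would make this rigorous by observing that $Cmplt\_vals_i$ grows monotonically under a fixed Cpa and that every branch that does not immediately backtrack to the parent is gated by \textsc{FindFeasibleValue} scanning forward in $D_i$ from a value that has either already been completed or been proven infeasible; thus no value can be revisited under the same Cpa. Once this monotonicity is pinned down, the outer induction on pseudo-tree depth and the trivial bound $\prod_i |D_i|$ on the number of distinct Cpas seen by any agent close the argument in finitely many steps.
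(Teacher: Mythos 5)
Your argument follows essentially the same route as the paper's: the inference phase terminates because it uses only a linear number of UTIL messages, and the search phase terminates because no agent revisits a value of its domain under the same incoming $Cpa$, which is exactly the paper's claim that the same partial assignment cannot be explored twice. Your version is more detailed — in particular the induction on subtree depth guaranteeing that every issued CPA eventually yields a BACKTRACK is a liveness step the paper leaves implicit — but the decomposition and the key invariant are the same.
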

\begin{proof}
	Directly from the pseudo codes, the inference phase will terminate since it only needs a linear number of messages. Thus, to prove the termination, it is enough to show that the same partial assignment cannot be explored twice in the search phase, i.e., an agent will not receive two identical $Cpa$s. Obviously, the claim holds for the root agent since it does not receive any CPA message. For an agent and a given $Cpa$ from its parent, it will send several CPA messages to each child. Since each of them contains the different assignments of the agent (line 29, line 39, line 57 - 60), the $Cpa$s sent to the child are all different. Therefore, the termination is hereby guaranteed. 
\end{proof}

\begin{lemma}
	For an agent $a_i$ and a given $Cpa$, the cost incurred by any assignment $Spa_i$ to the subtree rooted at $a_i$ with the assignment $(x_i=d_i)$ is no less than the corresponding lower bound $lb_i(d_i)$.
\end{lemma}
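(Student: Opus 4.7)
The plan is to prove the lemma by induction on the height of $a_i$ in the pseudo tree, mirroring the recursive definition of $lb_i$. Because a pseudo tree forbids edges between agents in separate branches, the cost of any assignment $Spa_i$ to the subtree rooted at $a_i$ extending $(x_i = d_i)$ admits the clean decomposition $\mathrm{cost}(Spa_i) = H_i(d_i) + \sum_{a_c \in C(a_i)} C_c$, where $H_i(d_i)$ collects the two-sided cost of every edge whose lower endpoint is $a_i$ (the edges from $a_i$ to $AP(a_i)$) and $C_c$ collects the two-sided cost of every edge whose lower endpoint lies in the subtree rooted at $a_c$ (this automatically absorbs both $f_{id}+f_{di}$ for pseudo-descendants $a_d$ of $a_i$ inside that subtree and the tree edge $(a_i,a_c)$ itself). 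The claim $lb_i(d_i) \le \mathrm{cost}(Spa_i)$ then reduces, via the definition of $lb_i$, to showing $high\_cost_i(d_i) \le H_i(d_i)$ and $lb_i^c(d_i) \le C_c$ for every child.

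The base case (leaf $a_i$) is immediate: there are no children, $lb_i(d_i) = high\_cost_i(d_i)$, and $high\_cost_i(d_i)$ equals $H_i(d_i)$ once every COST reply has arrived and is otherwise short only by non-negative terms, so $lb_i(d_i) \le H_i(d_i) = \mathrm{cost}(Spa_i)$. The same non-negativity of $\mathbb{R}_+^k$ handles the $high\_cost_i(d_i)$ factor in the inductive step. For $lb_i^c(d_i)$ I would split on how it was last set. If it still carries the initial inference value $Child\_util_i^c(Cpa_{[Sep(a_c)]}, x_i = d_i)$, unfolding the recursion that defines $Child\_util$ shows it is a minimum over descendant assignments (with extra min-projections whenever the dimension cap $k$ is active) of $f_{ic} + \sum_{a_d \in \mathrm{subtree}(a_c)} \sum_{a_j \in AP(a_d)} f_{dj}$. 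Every ``upper-side'' term $f_{jd}$ that the tailored ADPOP omits to protect a parent's privacy is non-negative, and min-projection can only decrease values, so the whole quantity is upper-bounded by $\min_{Spa_c} C_c(Spa_c)$ and hence by the specific $C_c$ appearing in the decomposition. If $lb_i^c(d_i)$ has instead been overwritten by a BACKTRACK carrying $lb_c^*$, the inductive hypothesis at $a_c$ gives $lb_c(d_c) \le C_c(Spa_c)$ for every $d_c \in Cmplt\_vals_c$, and taking the minimum over $Cmplt\_vals_c$ preserves the inequality; for any $d_c$ that was pruned at $a_c$ (so $lb_c(d_c) \ge ub_c$) the same hypothesis yields $C_c(Spa_c) \ge ub_c > lb_c^*$, so $lb_c^*$ bounds those branches as well.

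The main obstacle will be the inference sub-case: carefully cataloguing which $f$-terms sit inside $Child\_util_i^c$ and which are deliberately discarded for privacy, and checking that when a later BACKTRACK replaces the entry the discarded ``upper-side'' terms resurface through the $high\_cost_d$ values of the relevant descendants without ever being double-counted in the transition. The non-negativity of the cost codomain $\mathbb{R}_+^k$ is what keeps both the inference estimate and a partially populated $high\_cost_i$ valid lower bounds, so I would promote it to an explicit auxiliary observation and reuse it at every step.
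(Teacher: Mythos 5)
Your approach is essentially the paper's: induction on the height of $a_i$, the same decomposition of $cost(Spa_i)$ into the two\mbox{-}sided cost of $a_i$'s upward edges plus per\mbox{-}child subtree costs, and the same two observations (the upper\mbox{-}side terms $f_{jd}$ omitted by the tailored ADPOP are non\mbox{-}negative; minimal projection under the dimension cap only decreases values) to bound the initial inference value $Child\_util_i^c$. The one place you go beyond the paper --- the sub\mbox{-}case where $lb_i^c(d_i)$ has been overwritten by a BACKTRACK carrying $lb_c^*$ --- contains a direction error: after the local updates one has $ub_c=\min(\text{received }ub,\,lb_c^*)\le lb_c^*$, so your chain $C_c(Spa_c)\ge ub_c> lb_c^*$ does not close. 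For a value of $x_c$ pruned against the threshold $lb_c^*$ the desired bound still follows because $lb_c^*$ is non\mbox{-}increasing over time, but for a value pruned against the \emph{received} upper bound, $lb_c^*$ genuinely need not lower\mbox{-}bound that branch; one must instead argue, in the style of Lemma 4.3, that any such branch is already certified suboptimal and so the over\mbox{-}estimate cannot cause an optimal solution to be discarded. The paper sidesteps this entirely by treating $lb_c^*$ as the ``actual cost'' and proving only the initial\mbox{-}bound case, so your extra care is welcome but needs that repair to be sound.
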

\begin{proof}
	The lemma for a leaf is trivial since $lb_i(d_i)$ is set to the cost of $d_i$ which is obviously no greater than the cost of the feasible assignment. We now focus on no-leaf agents.
	Recall that $a_i$ will replace the original lower bound with the actual cost reported by $a_c$ after receiving a BACKTRACK message for $d_i$ from $a_c\in C(a_i)$ (line 35). Thus, to prove the lemma, it is sufficient to show that the initial lower bound $lb_i^c(d_i)$ is no greater than the actual cost of $Spa_c,\forall a_c\in C(a_i)$, where $Spa_c\subset Spa_i$ is the assignment to the subtree rooted at $a_c$.
	
	Consider the induction basis, i.e., $a_i$'s children are leafs. For each child $a_c\in C(a_i)$, we have
	$$
	\small
	\begin{aligned}
	cost(Spa_c)&=\sum_{a_j\in AP(a_c)}f_{jc}(d_j,d_c)+f_{cj}(d_c,d_j)\\
	&\ge \min_{x_c}f_{ic}(d_i,x_c)+\sum_{a_j\in AP(a_c)}f_{cj}(x_c,d_j)\\
	&\ge Child\_util_i^c(Cpa_{[Sep(a_c)]},x_i=d_i)=lb_i^c(d_i)
	\end{aligned}
	$$
	where $d_l$ is the assignment to $x_l$ in $Cpa$ or $Spa_c$.The equation in the second to the last step holds when the maximal dimension limit $k=\infty$. Thus, the lemma holds for the basis.
	
	Assume that the lemma holds for all $a_c\in C(a_i)$. Next, we are going to show the lemma holds for $a_i$ as well. For each child $a_c\in C(a_i)$, we have
	$$
	\tiny
	\begin{aligned}
	cost(Spa_c)&=\sum_{a_j\in AP(a_c)}f_{cj}(d_c,d_j)+f_{jc}(d_j,d_c)+\sum_{a_{c^\prime}\in C(a_c)}cost(Spa_{c^\prime})\\
	&\ge\sum_{a_j\in AP(a_c)}f_{cj}(d_c,d_j)+f_{jc}(d_j,d_c)+\sum_{a_{c^\prime}\in C(a_c)}lb_c^{c^\prime}(d_c)\\
	&\ge\min_{x_c}f_{ic}(d_i,x_c)+\sum_{a_j\in AP(a_c)}f_{cj}(x_c,d_j)+\sum_{a_{c^\prime}\in C(a_c)}lb_c^{c^\prime}(x_c)\\
	&\ge Child\_util_i^c(Cpa_{[Sep(a_c)]},x_i=d_i)=lb_i^c(d_i)
	\end{aligned}
	$$
	which establishes the lemma.
\end{proof}
\begin{lemma}
	For an agent $a_i$ and a given $Cpa$, any assignment to the subtree rooted at $a_i$ with cost greater than $ub_i$ cannot be a part of a solution with cost less than the global upper bound.
\end{lemma}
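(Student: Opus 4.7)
The plan is to prove Lemma 4.3 by induction on the depth of $a_i$ in the pseudo tree, using Lemma 4.2 to validate the lower bounds that appear in the recursive construction of $ub_i$.

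For the base case, I would take $a_i$ to be the root. Its upper bound is initialized to $\infty$ and, whenever applicable, overwritten by the best completed full-tree cost it has observed; by definition this value is the current global upper bound, so any full-tree assignment with cost exceeding $ub_i$ is not an improvement and the claim is immediate.

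For the inductive step, let $a_p = P(a_i)$ and assume the lemma at $a_p$. Every CPA message $a_p$ sends to $a_i$ carries (line 62)
\[
ub_i \;=\; ub_p \;-\; high\_cost_p(d_p) \;-\!\!\!\!\! \sum_{a_{c'} \in C(a_p)\setminus\{a_i\}} \!\!\!\! lb_p^{c'}(d_p),
\]
where $d_p$ is $a_p$'s currently explored value. Given any extension $Spa_i$ of the current $Cpa$ with $cost(Spa_i) > ub_i$, I would let $Spa_p$ be any completion of $Spa_i$ to the subtree rooted at $a_p$ (i.e.\ any assignment in the sibling subtrees of $a_i$). Since the cost decomposes along the pseudo tree as
\[
cost(Spa_p) \;=\; high\_cost_p(d_p) \;+\; cost(Spa_i) \;+\!\!\!\! \sum_{a_{c'} \in C(a_p)\setminus\{a_i\}} \!\!\!\! cost(Spa_{c'}),
\]
and Lemma 4.2 applied at each sibling gives $cost(Spa_{c'}) \ge lb_p^{c'}(d_p)$, substituting the defining equation for $ub_i$ yields $cost(Spa_p) > ub_p$. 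The inductive hypothesis at $a_p$ then rules out $Spa_p$ as part of an improving solution, so $Spa_i$ cannot contribute to one either.

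The main subtlety I expect is that $ub_i$ is not static after it is received: it can be tightened in place to the best completed subtree cost $lb_i^*$ under the current $Cpa$ (line 17--19 at leaves, line 37--38 at internal agents). A tightening only strengthens the hypothesis ``$cost(Spa_i) > ub_i$,'' and any $Spa_i$ that satisfies it either already exceeds the original parent-supplied budget (handled by the induction above) or is strictly worse than a concrete completion of cost $lb_i^*$ at the same $Cpa$, in which case substituting that cheaper completion preserves all sibling and ancestor costs and produces a strictly cheaper full solution; hence $Spa_i$ cannot itself participate in a better-than-global-upper-bound solution. I expect this in-place bookkeeping, rather than the budget-decomposition induction itself, to be the main technical obstacle in a fully rigorous write-up.
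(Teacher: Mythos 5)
Your proposal is correct and follows essentially the same route as the paper: both argue recursively up the pseudo tree using the line-62 budget equation $ub_j = ub_i + high\_cost_j(d_j) + \sum_{a_c\in C(a_j),\, c\neq i} lb_j^c(d_j)$ together with Lemma 4.2 applied to the sibling subtrees, and both separately dispose of the case where $ub_i$ has been tightened locally to a completed-subtree cost. Your write-up is in fact somewhat more explicit than the paper's (which leaves the cost decomposition and the appeal to Lemma 4.2 implicit), but there is no substantive difference in the argument.
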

\begin{proof}
	We will prove recursively, by showing that for a partial assignment $Spa_i$ to the subtree rooted at $a_i$ with $cost(Spa_i)>ub_i$, any partial assignment $Spa_j\supset Spa_i$ to the subtree rooted at $a_j$ will have $cost(Spa_j)>ub_j$ where $a_j=P(a_i)$. Note that $ub_i$ could be either an upper bound from $a_j$ via a CPA message (line 5) or a result of updating the upper bound locally (line 18, line 38). $a_i$ cannot backtrack by reporting $Spa_i$ in the latter case since there must exist a better partial assignment whose cost is $ub_i$. If $ub_i$ is received from $a_j$, according to line 62, we have
	$$ub_j=ub_i+high\_cost_j(d_j)+\sum_{a_c\in C(a_j)\land c\ne i}lb_j^c(d_j)$$
	Thus, $cost(Spa_i)>ub_i$ necessarily means that any partial assignment $Spa_j\supset Spa_i$ will have $cost(Spa_j)>ub_j$.
\end{proof}
\begin{theorem}
	PT-ISABB is complete.
\end{theorem}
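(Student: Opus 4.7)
The plan is to establish completeness by combining the three preceding lemmas. Completeness here means that PT-ISABB terminates and that the assignment $Spa_r(d_r^*)$ eventually selected at the root agent $a_r$ is a global optimum. Termination is already delivered by Lemma 4.1, so the work lies in showing optimality. My strategy is a standard branch-and-bound argument: demonstrate that the search phase is sound (every pruned subspace is certifiably suboptimal) and that every subspace not pruned is eventually evaluated.

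First I would set up an induction on the pseudo tree, arguing that for any agent $a_i$ and any $Cpa$ passed to it, whenever $a_i$ discards an assignment $d_i$ from further exploration it must be because $lb_i(d_i) \geq ub_i$ held at that moment. Lemma 4.2 then promotes this bound into a statement about actual costs: the true cost of any completion of $Cpa$ with $x_i = d_i$ is at least $ub_i$. Lemma 4.3 lifts this further up the tree, certifying that no full assignment extending the pruned pair $(Cpa, d_i)$ can improve on the best solution already found. Chaining these observations, one concludes that an optimal assignment $A^*$ is never discarded: if it were, there would already exist a strictly cheaper full assignment, contradicting its optimality. Assuming the root's initial upper bound is $\infty$, the optimum is thus preserved throughout.

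With soundness of pruning in hand, the proof closes by invoking termination. Since Lemma 4.1 ensures that each agent processes each admissible $Cpa$ at most once and responds through the BACKTRACK cascade, the root agent eventually has $Cmplt\_vals_r$ contain the optimal value $d_r^*$ together with the cost of the best completion. Taking $\arg\min$ over $Cmplt\_vals_r$ therefore returns a global optimum, which combined with termination establishes completeness.

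The main obstacle I anticipate is handling the asynchronous branching carefully. An agent runs concurrent explorations for different $d_i$ across different children, and its $high\_cost_i$ is updated only as COST messages arrive from parents and pseudo parents; similarly, $lb_i^c(d_i)$ is refined only as BACKTRACK messages arrive. I would need to verify that at each decision point where a value is skipped (lines 12, 29), a backtrack is issued (lines 22, 49), or a child upper bound is formed (line 62), the bounds then in use already satisfy the hypotheses of Lemmas 4.2 and 4.3, that is, $ub_i$ truly bounds the remaining improvement budget for the subtree rooted at $a_i$, and $lb_i(d_i)$ truly lower-bounds the cost of any completion. Maintaining this invariant through the asynchronous message interleaving is the technical heart; once it is in place, the chain Lemma 4.1 $+$ Lemma 4.2 $+$ Lemma 4.3 yields completeness almost mechanically.
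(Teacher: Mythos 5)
Your proposal is correct and follows essentially the same route as the paper: the paper's proof of Theorem 4.4 simply invokes Lemma 4.1 for termination and Lemmas 4.2 and 4.3 for the soundness of pruning, exactly the chain you describe. Your additional discussion of the asynchronous invariants is a more careful unpacking of what the paper treats as immediate, not a different argument.
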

\begin{proof}
	Immediately from Lemma 4.1, Lemma 4.2 and Lemma 4.3, the algorithm will terminate and all pruned assignments are suboptimal. Thus, PT-ISABB is complete.
\end{proof}
\subsection{Lower bound tightness}
\begin{property}
	For an agent $a_i$ and a given $Cpa$, the initial lower bound $lb_i^c(d_i)$ of $a_c\in C(a_i)$ for $d_i$ is at least as tight as the one in AsymPT-FB when the maximal dimension limit $k=\infty$.
\end{property}
\begin{proof}
	In AsymPT-FB, the lower bound for $a_c$ after receiving all the \textbf{LB\_Report}s from the subtree rooted at $a_c$ is given by the sum of the best single side local costs of $a_c$'s descendants under $Cpa$. That is,
	$$
	\scriptsize
	\begin{aligned}
	SubtreeLB_i^c(d_i)&=\sum_{a_j\in Desc(a_c)}\min_{x_j}\sum_{a_l\in Sep(a_c)\cap PP(a_j)}f_{jl}(x_j,d_l)\\
	&+\min_{x_c}\sum_{a_l\in AP(a_c)}f_{cl}(x_c,d_l)
	\end{aligned}
	$$
	where $Desc(a_c)$ is the set of the descendants of $a_c$. For the sake of clarity, we denote the vector of $x_c$ and its descendant variables as $\textbf{x}_\textbf{c}$. Next, we will show $lb_i^c(d_i)\ge SubtreeLB_i^c(d_i)$. Since $k=\infty$, the inference phase does not drop any dimension. Thus, we have
	$$
	\scriptsize
	\begin{aligned}
	%	\begin{split}
	lb_i^c(d_i)&=\min_{\textbf{x}_\textbf{c}}\sum_{a_j\in Desc(a_c)}\left(\sum_{a_l\in AP(a_j)\cap Sep(a_c)}f_{jl}(x_j,d_l)\right.\\
	&\left.+\sum_{a_l\in AP(a_j)\cap Desc(a_i)}f_{jl}(x_j,x_l)+\sum_{a_l\in C(a_j)}f_{jl}(x_j,x_l)\right)\\
	&+\sum_{a_l\in AP(a_c)}f_{cl}(x_c,d_l)+\sum_{a_l\in C(a_c)}f_{cl}(x_c,x_l)+f_{ic}(d_i,x_c)\\
	&\ge\min_{\textbf{x}_\textbf{c}}\sum_{a_j\in Desc(a_c)}\sum_{a_l\in AP(a_j)\cap Sep(a_c)}f_{jl}(x_j,d_l)\\
	&+\sum_{a_l\in AP(a_c)}f_{cl}(x_c,d_l)
	%	\end{split}
	\end{aligned}
	$$
	Since $x_c\in \textbf{x}_\textbf{c}$ and $AP(a_j)\supset PP(a_j)$, the right-hand side of the inequality in the last step can be further reduced. That is,
	$$
	\scriptsize
	\begin{aligned}
	%	\begin{split}
	lb_i^c(d_i)&\ge\min_{\textbf{x}_\textbf{c}\backslash x_c}\sum_{a_j \in Desc(a_c)}\sum_{a_l\in AP(a_j)\cap Sep(a_c)}f_{jl}(x_j,d_l)\\
	&+\min_{x_c}\sum_{a_l\in AP(a_c)}f_{cl}(x_c,d_l)\\
	&\ge \min_{\textbf{x}_\textbf{c}\backslash x_c}\sum_{a_j\in Desc(a_c)}\sum_{a_l\in PP(a_j)\cap Sep(a_c)}f_{jl}(x_j,d_l)\\
	&+\min_{x_c}\sum_{a_l\in AP(a_c)}f_{cl}(x_c,d_l)\\
	&\ge\sum_{a_j\in Desc(a_c)}\min_{x_j}\sum_{a_l\in Sep(a_c)\cap PP(a_j)}f_{jl}(x_j,d_l)\\
	&+\min_{x_c}\sum_{a_l\in AP(a_c)}f_{cl}(x_c,d_l)\\
	&=SubtreeLB_i^c(d_i)
	%	\end{split}
	\end{aligned}
	$$
	which concludes the property.
\end{proof}
\subsection{Complexity}
Since an agent $a_i$ stores $Child\_util_i^c$ and $lb_i^c(d_i)$ for each child, the overall space complexity in the worst case (i.e., $k=\infty$) is $O(|C(a_i)|d_{max}^{|Sep(a_i)|+1}+|C(a_i)||D_i|)$ where $d_{max}=\max_{a_j\in Sep(a_i)}|D_j|$. Since it contains all the dimensions of $Sep(a_i)$ and itself, the size of an UTIL message from $a_i$ is $O(d_{max}^{|Sep(a_i)|+1})$ when $k=\infty$. For a CPA message, it consists of the assignment of each agent and an upper bound. Thus, the size of a CPA message is $O(|A|)$. Other messages including COST\_REQ, COST, BACKTRACK and TERMINATE carry several scalars and thus they only require $O(1)$ space.

Different than standard DPOP/ADPOP, PT-ISABB only requires $|A|-1$ messages in the inference phase since it does not have the value propagation phase. Like any other search based complete algorithm, the message number of the search phase grows exponentially with respect to the agent number.

\section{Experimental Results}
We empirically evaluate PT-ISABB with state-of-the-art search-based complete algorithms for ADCOPs including SABB, ATWB and AysmPT-FB on three configurations. To demonstrate the real power of non-local elimination, we also consider SABB on a pseudo-tree (PT-SABB) and the local elimination version of PT-ISABB (PT-ISABB, local) with $k=\infty$. In the first ADCOP configuration, we set the graph density to 0.25, the domain size to 3 and vary the agent number from 8 to 18. The second configuration is ADCOPs with 8 agents and the domain size of 8. The graph density varies from 0.25 to 1. In the last configuration, we consider asymmetric MaxDCSPs with 10 agent, the domain size of 10 and the graph density of 0.4, and the tightness varies from 0.1 to 0.8. For each of the settings, we generate 50 random instances and the results are averaged over all instances. In our experiments, we use the number of non-concurrent logical operations (NCLO) \cite{Netzer2012} to evaluate hardware-independent runtime, in which the logical operations in the inference phase are accesses to utility tables, and for the search phase and other competitors they are constraint checks. Also, we use the message number and the size of total information exchanged to measure the network load. For asymmetric MaxDCSPs, we use entropy \cite{Brito2009Distributed} to quantify the privacy loss \cite{litov2017forward,grinshpounGZNM13}. The experiments are conducted on an i7-7820x workstation with 32GB of memory and for each algorithm we set the timeout to 2 minutes. 
\begin{figure}
	\centering
	\includegraphics[scale=.38]{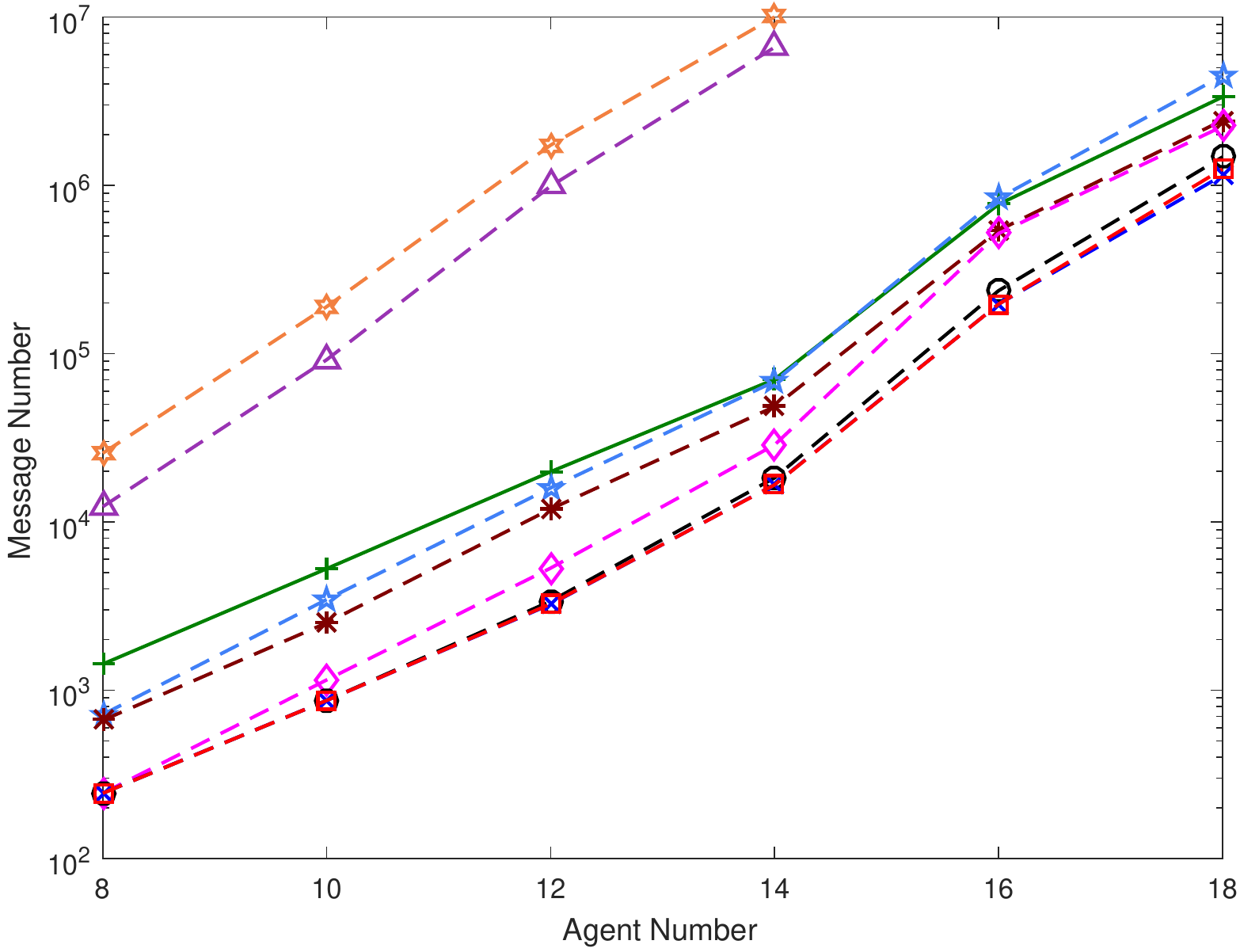}\\
	(a)\\
	\includegraphics[scale=.38]{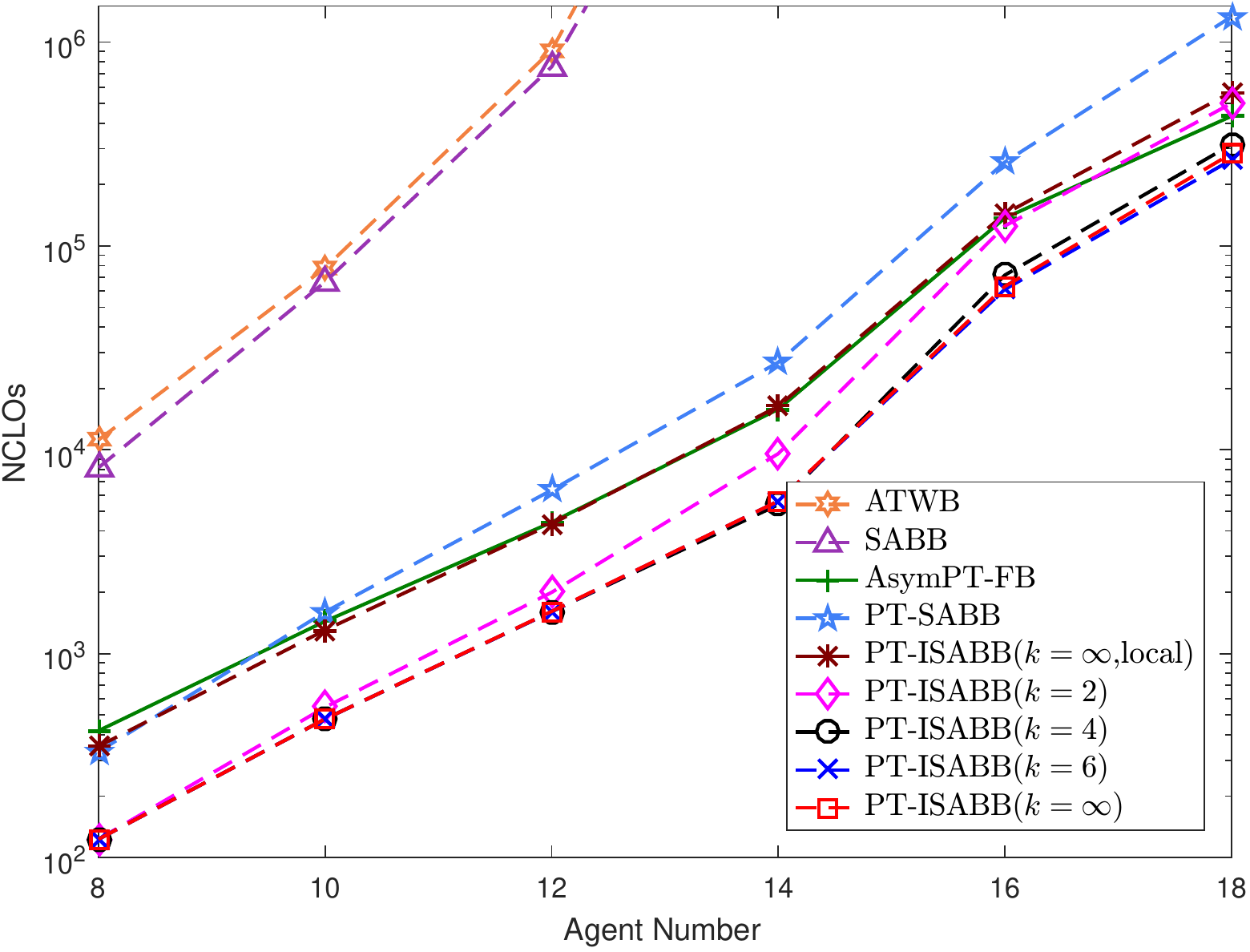}\\
	(b)
	\caption{Performance comparison under different agent numbers}
\end{figure}
\begin{figure}
	\centering
	\includegraphics[scale=.38]{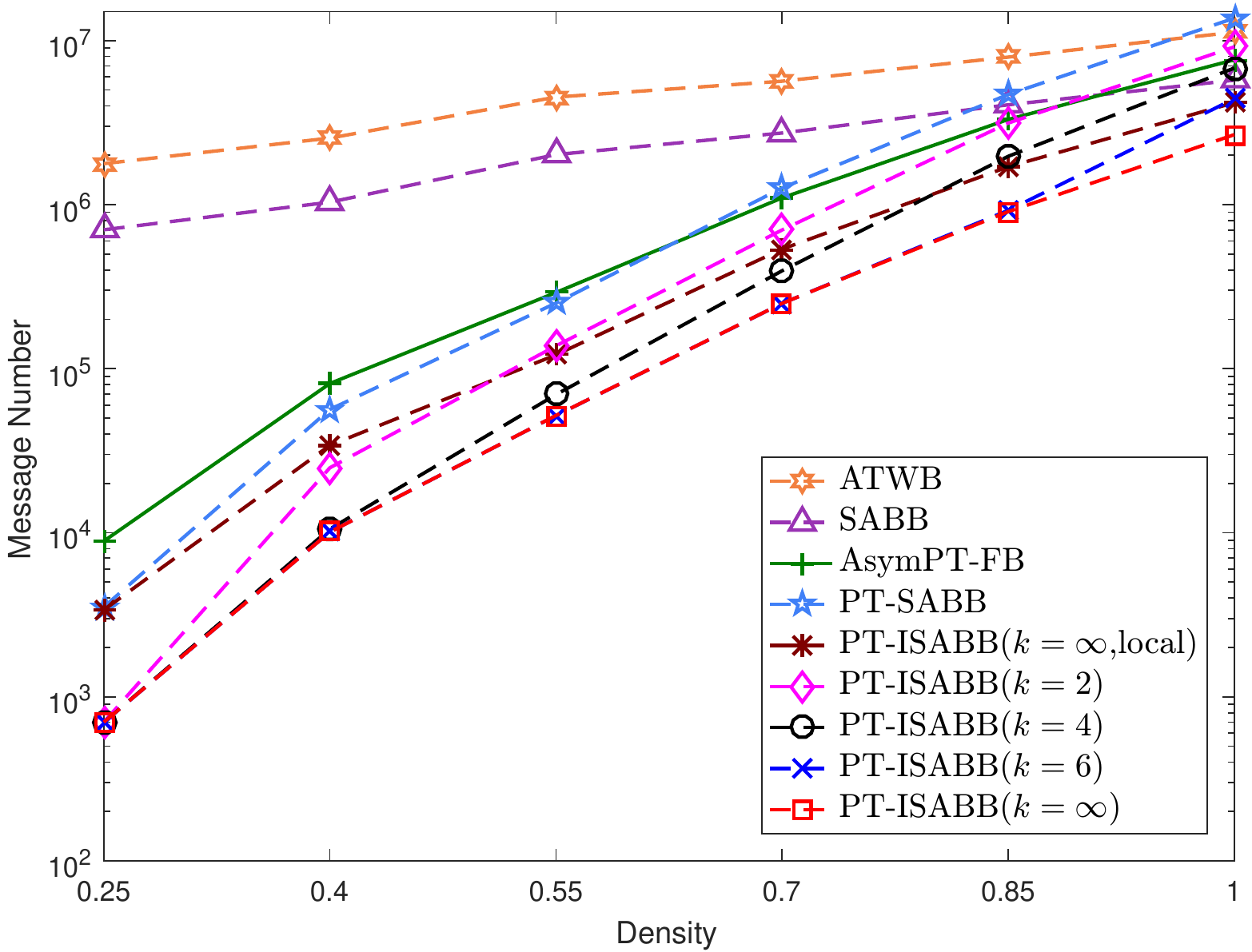}\\
	(a)\\
	\includegraphics[scale=.38]{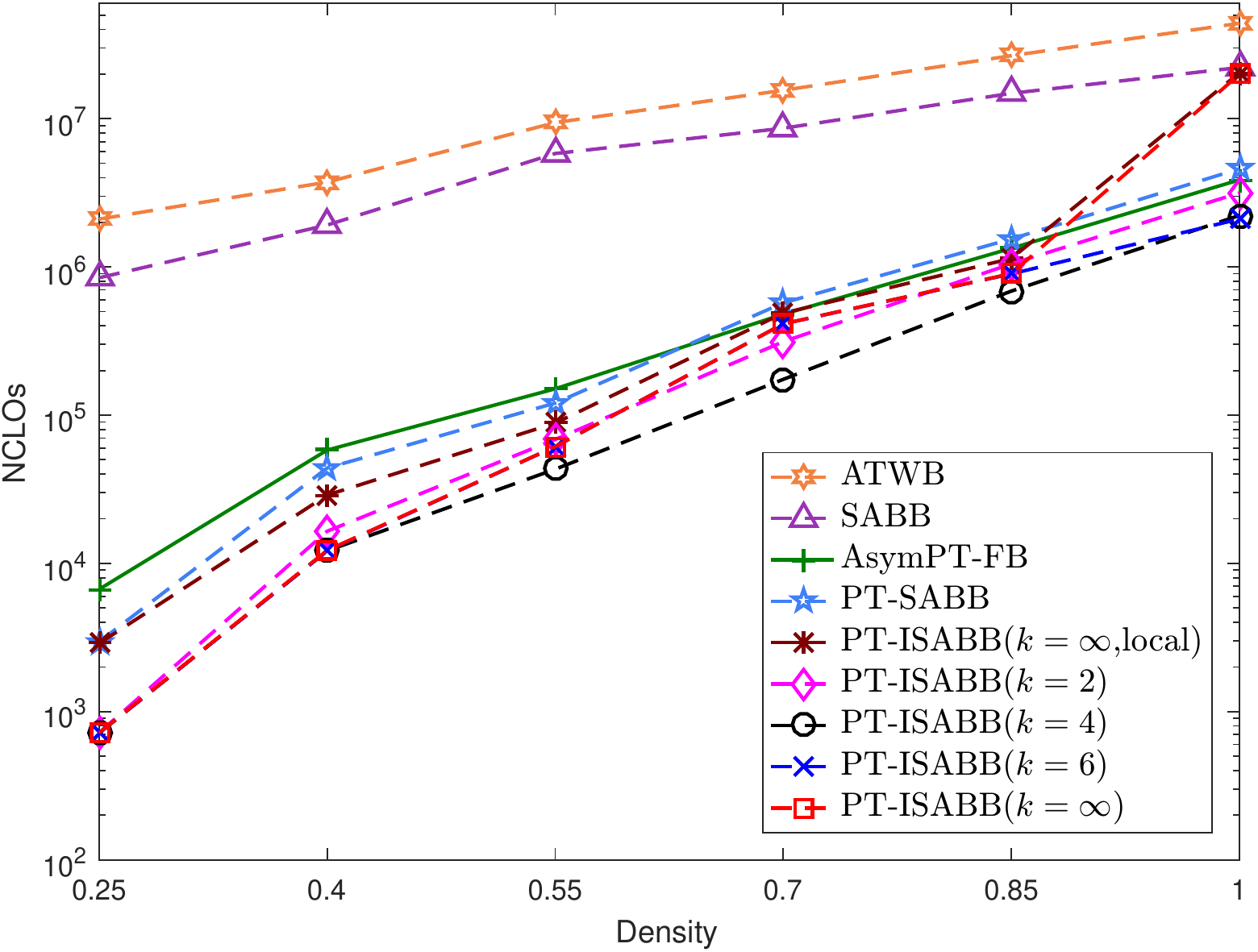}\\
	(b)
	\caption{Performance comparison under different graph densities}
\end{figure}

Fig. 5 gives the performance comparison on different agent numbers, and the average induced widths in the experiments are 1 $\sim$ 6.84. It can be seen from the figure that all the algorithms suffer from exponential overheads as the agent number grows. Among them, our proposed PT-ISABB requires significant fewer messages and lower NCLOs than the other competitors, which demonstrates the superiority of the hybrid execution of inference and search. On the other hand, although PT-ISABB ($k=\infty$, local) employs an complete one-side inference to construct the initial lower bounds, it is still inferior to PT-ISABB with $k>2$, which demonstrate the necessity of non-local elimination. Besides, it is worth noting that PT-ISABB requires much fewer messages than AsymPT-FB even when the maximal dimension limit $k$ is small (e.g., $k=2$). That is because PT-ISABB does not rely on forward bounding which is expensive in message-passing to compute lower bounds. Moreover, the phenomenon also indicates that our algorithm can produce tighter lower bounds even if the memory budget is relatively low.

Fig. 6 gives the results under different graph densities. The average induced widths here are 1 $\sim$ 6. Note that in this configuration, the size of the search space does not change and the complexity is reflected in the topologies. It can be concluded from the figure that all the tree-based algorithms exhibit great superiorities when the graph density is low, and the advantages vanish as the density grows. That is because those algorithms can effectively parallel the search processes on sparse problems. Dense problems, on the other hand, usually result in pseudo trees with low branching factors, making the tree-based algorithms require more messages than SABB. Even so, our proposed PT-ISABB with large $k$ still outperforms SABB when the problems are fully connected, which demonstrates the necessity of tighter lower bounds. Additionally, the figure also indicates that PT-ISABB with different $k$ performs similarly on sparse problems, but the performances vary a lot on dense problems. That is due to the fact that the induced width of a pseudo tree is relatively small when solving a spare problem and thus only a small set of dimensions is dropped during the inference phase. Besides, it can be seen from the figure that although both PT-ISABB ($k=\infty$, local) and PT-ISABB ($k=\infty$) perform complete inferences, the non-local elimination version requires lower NCLOs and fewer message numbers in most of the settings. That is because the non-local elimination version can provide tighter lower bounds which result in efficient pruning, and thus the algorithm incurs fewer constraint checks and messages in the search phase.
\begin{figure}
	\centering
	\includegraphics[scale=.38]{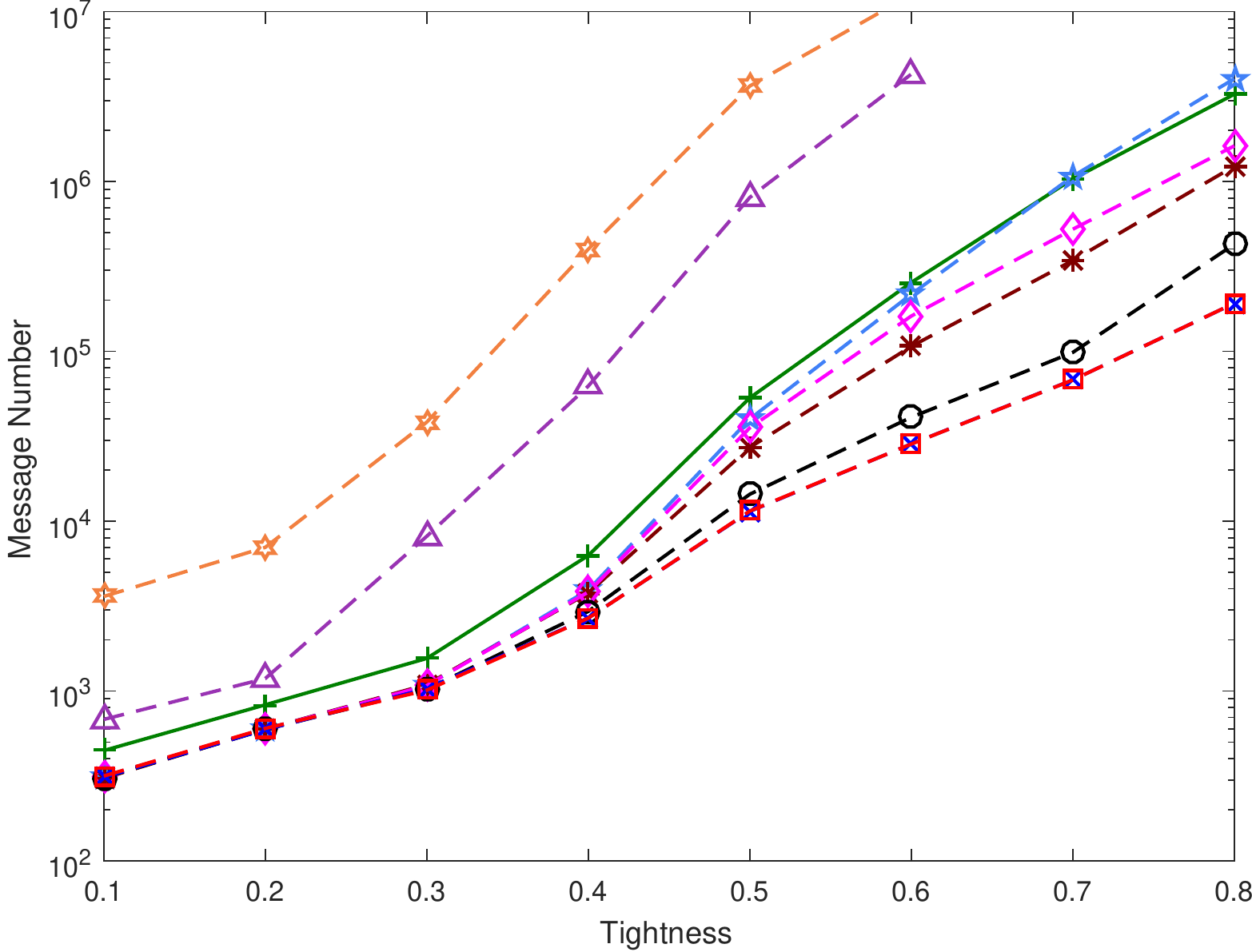}\\
	(a)\\
	\includegraphics[scale=.38]{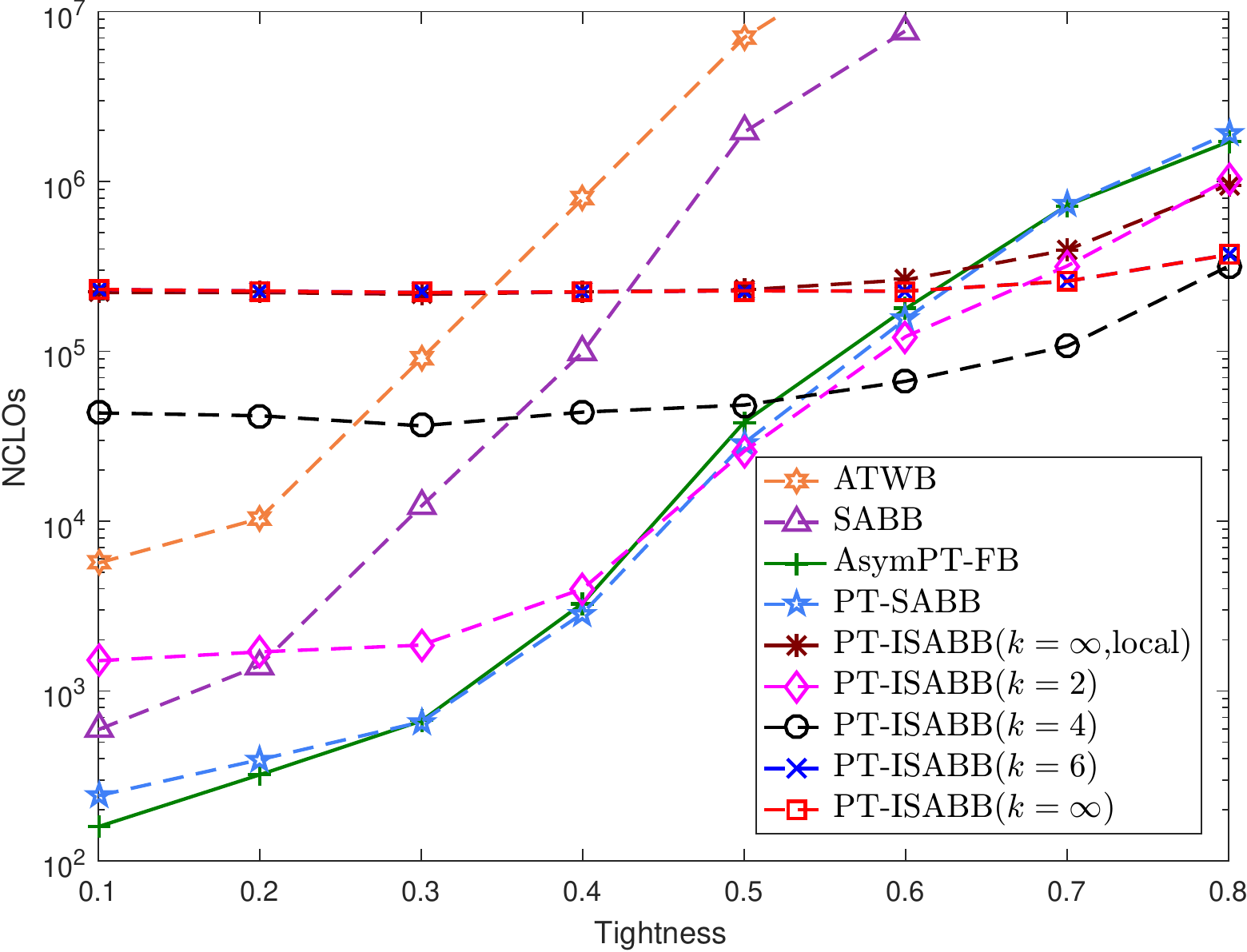}\\
	(b)
	\caption{Performance comparison under different tightness}
\end{figure}

\begin{table}
	\centering
	\includegraphics[scale=.52]{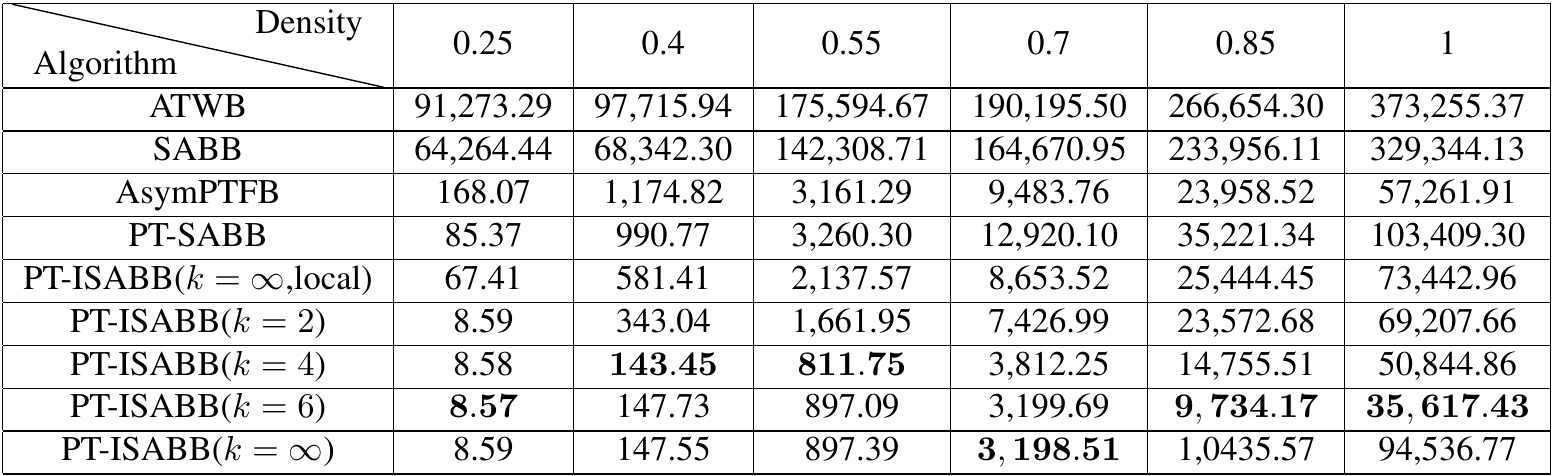}
	\caption{The size of total information exchanged of each algorithm under different graph densities (in KB)}
\end{table}

Table 1 presents the size of total information exchanged of each algorithm under different densities. It can be seen from the table that all the non-local elimination versions of PT-ISABB exhibit great advantages over the other search-based competitors, except that PT-ISABB ($k=2$) is slightly inferior to AsymPT-FB when solving the fully-connected problems. The phenomenon indicates that although a message in the inference phase is generally larger than the one in the search phase, the algorithms can still gain great reductions on network traffics since the effective pruning in the search phase greatly reduces the message number. Besides, it is interesting to find that a large dimension limit (e.g., $k=\infty$) does not necessarily result in the smallest traffic. That is because the size of a message in the inference phase is exponential to the minimum of the induced width and the dimension limit. Besides, it should be noted that although PT-SABB requires more messages than ATWB and SABB when solving fully-connected problems according to Fig. 6, it still incurs much smaller traffic due to the fact that the last agent in ATWB and SABB needs to broadcast the reached complete solution to all other agents once a new solution is constructed. In contrast, agents in PT-SABB only back up the assignments to their descendants, which are subsets of the complete solution, to their parents via BACKTRACK messages.

Fig. 7 presents the results on asymmetric MaxDCSPs with different tightness, and the average induced width is 3.92. This configuration neither increases the search space nor affects the topologies, but instead increases the difficulty of pruning. All the algorithms except ATWB produces few messages when solving problems with low tightness. That is because on these problems the algorithms can find low upper bounds very quickly to prune most of the search space. As the tightness grows, the number of prohibited combinations increases and the algorithms can no longer find low upper bounds promptly. As a result, the algorithms require much more search efforts to exhaust the search space. Since they cannot exploit topologies to accelerate the search process, SABB and ATWB perform poorly and can only solve the problems with tightness up to 0.6. On the other hand, the tree-based algorithms divide a problem to several smaller subproblems at each branching agent and search the subproblems in parallel. Thus, those algorithms exhibit better performances and solve all the problems. Among them, our proposed PT-ISABB with $k\ge 4$ incurs much smaller overheads, which demonstrates the effectiveness of the inference phase in computing tighter lower bounds. In other words, although PT-ISABB only guarantees to produce lower bounds as tight as the ones of AsymPT-FB when $k=\infty$ according to Property 4.1, it requires less memory consumption to compute such lower bounds in practice. Besides, it can be seen from the figure that PT-SABB incurs smaller communication overheads than AsymPT-FB when solving the problems with low tightness, which demonstrates forward bounding is expensive in message-passing. Additionally, it can be concluded that PT-ISABB algorithms with large $k$ require much more NCLOs than the other competitors when solving problems with low tightness. That is no surprise since inference on problems with large domain sizes would be more expensive, and a search-based algorithm actually can find a feasible solution very quickly even if the lower bounds are poor when solving these problems.
\begin{figure}
	\centering
	\includegraphics[scale=.38]{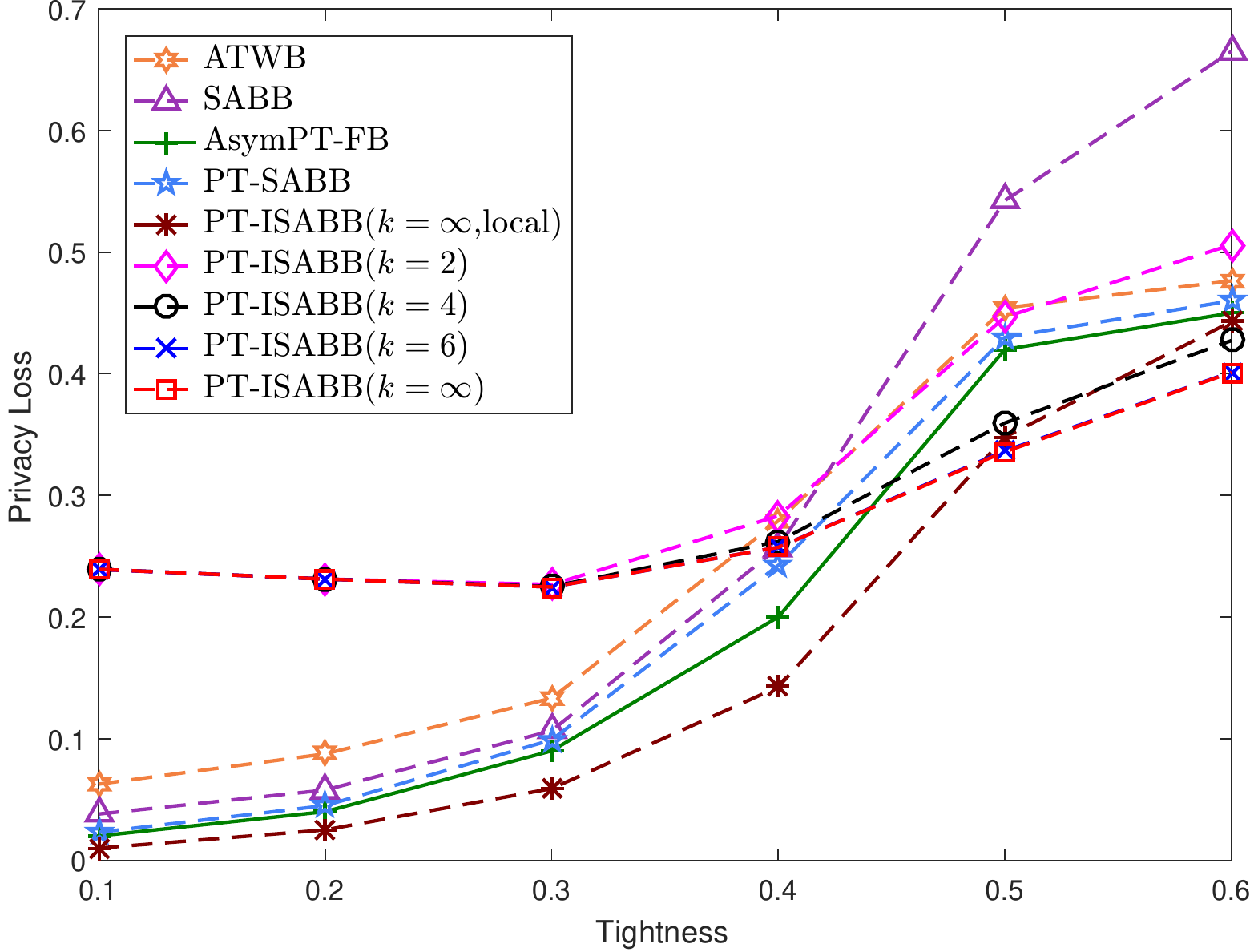}
	\caption{Privacy losses under different tightness}
\end{figure}

Fig. 8 gives privacy losses under different tightness. Privacy loss in PT-ISABB comes from both the inference phase and the search phase. Specifically, since the variable elimination is performed by parents (i.e., line 4 of the inference phase), parents can easily figure out which pairs of assignments are feasible with respect to the constraints enforced in children sides from the zero entries of the utilities from children. Thus, the inference phase would cause a half privacy loss only on each tree edge in the worst case, which is still better than leaking at least a half privacy if we directly use DPOP to solve problems. Besides, the direct disclosure mechanism of the search phase in which agents request their (pseudo) parents to expose the private costs before expanding the search space also leads to the privacy loss. However, the loss could be much reduced via the effective pruning. It can be seen that our proposed algorithm leaks more privacy than the other competitors when solving the problems with low tightness. That is no surprise because these problems usually have feasible solutions, which leads to the fact that most of entries in a utility from a child are zero. That being said, PT-ISABB with $k\ge 4$ leaks less privacy than the other competitors when solving the problems with high tightness. The reason is twofold: parents can no longer infer the feasible assignment pairs as the tightness grows, and the inference phase produces tight lower bounds which lead to the effective pruning in the search phase. Besides, it is worth mentioning that the local elimination version of PT-ISABB performs better in terms of privacy preservation when solving the problems with low tightness. That is because variables are already eliminated before sending utilities to their parents. As a result, parents can only know the best utilities they can achieve, but cannot figure out the corresponding assignments to their children.
\section{Conclusion}
It is known that DPOP/ADPOP for DCOP cannot be directly applied to ADCOP due to a privacy concern. In this paper, we take ADPOP into solving ADCOP for the first time by combining with a tree-based variation of SABB, and present a two-phase complete algorithm called PT-ISABB. In the inference phase, a non-local elimination version of ADPOP is performed to solve a subset of constraints and build look-up tables for the tighter lower bounds. In the search phase, a tree-based variation of SABB is implemented to exhaust the search space. The experimental results show that our algorithms exhibit great superiorities over state-of-the-art search based algorithms, as well as the local elimination version of PT-ISABB. Also, our algorithms leak less privacy when solving complex problems.
\begin{acks}
	The authors would like to thank the anonymous referees for
	their valuable comments and helpful suggestions. This work is
	supported by the \grantsponsor{}{Chongqing Research Program of Basic Research and Frontier Technology}{} under Grant
	No.:\grantnum{}{cstc2017jcyjAX0030}, \grantsponsor{}{Fundamental Research Funds for the Central Universities}{} under Grant
	No.: \grantnum{}{2018CDXYJSJ0026},
	\grantsponsor{}{National Natural Science Foundation of China}{} under Grant
	No.: \grantnum{}{51608070}
	and
	the \grantsponsor{}{Graduate Research and Innovation Foundation of Chongqing, China}{} under Grant
	No.: \grantnum{}{CYS18047}
\end{acks}

%%%%%%%%%%%%%%%%%%%%%%%%%%%%%%%%%%%%%%%%%%%%%%%%%%%%%%%%%%%%%%%%%%%%%%%%%%%%%%%%%%%%%%%%%%%%%%%%%%%%%%%%%
%% bibliography: see CFP for number of permitted pages

\bibliographystyle{ACM-Reference-Format}  % do not change this line!
\balance  % do not change this line -- unless you manually balance your last page
\bibliography{ref.bib}  % put name of your .bib file here

\end{document}